\newtheorem{theorem}{{Theorem}}
\newtheorem{lemma}[theorem]{{Lemma}}
\newtheorem{corollary}[theorem]{{Corollary}}
\newtheorem{definition}{{Definition}}
\newcommand{\cA}{{\cal A}} 
\newcommand{\cB}{{\cal B}}
\newcommand{\cC}{{\cal C}}
\newcommand{\cG}{{\cal G}} 
\newcommand{\cH}{{\cal H}}
\DeclareMathAlphabet{\mathbfsl}{OT1}{ppl}{b}{it} 
\newcommand{\bU}{\mathbfsl{U}}
\newcommand{\bY}{\mathbfsl{Y}}
\newcommand{\be}[1]{\begin{equation}\label{#1}}
\newcommand{\ee}{\end{equation}} 
\newcommand{\eq}[1]{(\ref{#1})}
\renewcommand{\le}{\leqslant} 
\renewcommand{\leq}{\leqslant}
\renewcommand{\ge}{\geqslant} 
\renewcommand{\geq}{\geqslant}
\newcommand{\script}[1]{{\mathscr #1}}
\renewcommand{\Bbb}{\mathbb}
\newcommand{\C}{{\Bbb C}}
\newcommand{\Tref}[1]{Theo\-rem\,\ref{#1}}
\newcommand{\Lref}[1]{Lem\-ma\,\ref{#1}}
\newcommand{\Cref}[1]{Co\-ro\-lla\-ry\,\ref{#1}}
\newcommand{\deff}{\mbox{$\stackrel{\rm def}{=}$}}
\newcommand{\Strut}[2]{\rule[-#2]{0cm}{#1}}
\newcommand{\hU}{\widehat{\bU}}
\newcommand{\Gn}{G^{\otimes n}}
\newcommand{\Span}[1]{{\left\langle {#1} \right\rangle}}
\newcommand{\sX}{\script{X}}
\newcommand{\sY}{\script{Y}}
\newcommand{\shalf}{\mbox{\raisebox{.8mm}{\footnotesize $\scriptstyle 1$}
\footnotesize$\!\!\! / \!\!\!$ \raisebox{-.8mm}{\footnotesize
$\scriptstyle 2$}}}
\begin{document}

\title{Compound Polar Codes}

\author{%
\authorblockN{\large{
Hessam Mahdavifar, 
        Mostafa El-Khamy, 
        Jungwon Lee,
        Inyup Kang}}\\
\authorblockA{
Mobile Solutions Lab, Samsung Information Systems America\\
4921 Directors Place, San Diego, CA 92121\\
{ \{h.mahdavifar,\,mostafa.e,\,jungwon2.lee,\,inyup.kang\}@samsung.com}}\vspace*{-1.5ex}
}

\renewcommand{\markboth}[2]
{\renewcommand{\leftmark}{#1}\renewcommand{\rightmark}{#2}}

\markboth%
{{\sc Mahdavifar, El-Khamy, Lee and Kang:}
COMPOUND POLAR CODES}
{To be submitted to {\sc IEEE Transactions on Information Theory}}

\maketitle

\begin{abstract}
A capacity-achieving scheme based on polar codes is proposed for reliable communication over multi-channels which can be directly applied to bit-interleaved coded modulation schemes. We start by reviewing the ground-breaking work of polar codes and then discuss our proposed scheme. Instead of encoding separately across the individual underlying channels, which requires multiple encoders and decoders, we take advantage of the recursive structure of polar codes to construct a unified scheme with a single encoder and decoder that can be used over the multi-channels. We prove that the scheme achieves the capacity over this multi-channel. Numerical analysis and simulation results for BICM channels at finite block lengths shows a considerable improvement in the probability of error comparing to a conventional separated scheme. 
\end{abstract}

\begin{keywords}
Compound polar code, channel polarization, multi-channels 
\end{keywords}

\IEEEpeerreviewmaketitle

\section{Introduction} 
\label{sec:Introduction}
\PARstart{P}{olar} codes, introduced by Arikan in \cite{Arikan}, are the first provably capacity achieving codes for the class of binary-input symmetric discrete memoryless channels with low encoding and decoding complexity. Construction of polar codes is based on a phenomenon called the \emph{channel polarization}. It is proved in \cite{Arikan} that as the block length grows large the channels seen by individual bits through a certain transformation called the \emph{polar transformation} start polarizing: they approach either a noise-less channel or a pure-noise channel. This suggests the construction of polar codes as follows: put the information bits over the set of good bit-channels i.e. almost noise-less channels while fix the input to the rest of the bit-channels to zeros. The set of underlying bit-channels can be sorted from good to bad based on their corresponding \emph{Bhattacharyya parameter}. Then the bit-channels with Bhattacharrya parameters below a certain threshold are called good while the rest are called bad. In fact, the unequal error protection seems to be an inherent property of polar codes. Therefore, it is natural to exploit this property to design codes for bit-interleaved coded modulation channels. 

Bit-interleaved coded modulation (BICM) schemes can be modeled as a multi-channel consisting of several underlying binary-input channels over which the coded bits are transmitted. Therefore, it is of great interest to design codes that are efficient when used over a certain set of different channels. However, it is not straightforward how to design a polar code to be transmitted over a set of channels rather than a single channel, how to establish polarization theory in this case etc. One immediate solution is to encode the information separately over the underlying channels using polar encoders corresponding to each channel. However, from a practical point of view, it is desirable to have only one encoder and one decoder to reduce the hardware complexity. Also, by combining all the channels together and sending one single codeword, efficiently designed for the corresponding multi-channel, we achieve a better trade-off between the rate and probability of error in the whole scheme. We call this unified polar-based scheme for transmission over multi-channels as \emph{compound polar codes}.

The rest of this paper is organized as follows. In Section\,\ref{sec:two}, we provide some required background on channel polarization and construction of polar codes. In Section\,\ref{sec:three}, we propose compound polar codes for $2$-multi-channels and prove that it achieves the capacity of the combined channel. We also discuss how to extend this to $l$-multi-channels. In Section\,\ref{sec:four}, we present the simulation results for AWGN channel with BICM $16$-QAM constellation. We close the paper by mentioning some directions for future work and open problems in Section\,\ref{sec:five}.

\section{Preliminaries}
\label{sec:two}

In this section we provide an overview of the groundbreaking work of
Arikan~\cite{Arikan} 
and others~\cite{AT,Korada,KSU} on polar 
codes and channel polarization. 

The polar code construction is based on the following observation by Arikan which is called \emph{channel polarization}. Let 
\be{G-def}
G 
\ = \
\left[ 
\begin{array}{c@{\hspace{1.25ex}}c}
1 & 0\\
1 & 1\\ 
\end{array}
\right]
\ee
The $i$-th Kronecker power of $G$, which is denoted by $G^{\otimes i}$, is defined by induction i.e. $G^{\otimes 1} = G$ and for any $i > 1$:
$$
G^{\otimes (i)}
\ = \
\left[ 
\begin{array}{c@{\hspace{1.25ex}}c}
G^{\otimes (i-1)} & 0\\
G^{\otimes (i-1)} & G^{\otimes (i-1)}\\ 
\end{array}
\right]
$$
Next, for all $N = 2^n$, let us define the Arikan transform matrix
$G_N \,\smash{\raisebox{-0.35ex}{\deff}}\, R_N \Gn$, 
where $R_N$ is the bit-reversal permutation matrix defined 
in \cite[Section\,VII-B]{Arikan}. Now consider a block of $N$ uniform i.i.d. information bits $U_1, U_2,\dots,U_N$, denoted by $U_1^N$, and multiply it by $G_N$ to get the vector $X_1^N$. $X_i$'s are transmitted through $N$ independent copies of a binary input discrete memoryless channel (B-DMC) $W$. The output is denoted by $Y_1^N$. The transformation from $U^N_1$ to $Y^N_1$ is called the polar transformation. 

A finite-input and finite-output discrete memoryless channel is denoted by a triple $\Span{\sX,\sY,W}$, where $\sX$, $\sY$
are finite sets. $W$ is the transition probability matrix which is an $|\sX| \times |\sY|$ matrix. For any $x \in \!\sX$ and $y \in \!\sY$, $W[x,y]$, conventionally written as $W(y|x)$, is the probability of receiving $y \in \!\sY$ given that $x \in \!\sX$ was sent. With a slight abuse of notation, we simply write $W$ to 
denote the channel $\Span{\sX,\sY,W}$.

\begin{definition} 
A binary-input discrete memoryless channel (B-DMC) $W : \left\{0,1\right\} \rightarrow \sY$ is called \emph{symmetric} if there exists a permutation $\pi : \sY \rightarrow \sY$ with $\pi^{-1} = \pi$ such that for any $y \in \sY$, $W(y | 0) = W(\pi(y) | 1)$.
\end{definition}

For any B-DMC $W$, the \emph{Bhattacharyya parameter} of $W$ is
$$ 
Z(W)
\,\ \deff\kern1pt
\sum_{y\in\sY} \!\sqrt{W(y|0)W(y|1)}
$$
It is easy to show that the Bhattacharyya parameter $Z(W)$ is always between $0$ and $1$.
Intuitively, $Z(W)$ shows how good the channel $W$ is. Channels with $Z(W)$ close to zero are 
almost noiseless, while channels with $Z(W)$ close to one 
are almost pure-noise channels.
This intuition is clarified more by the following inequality. It is shown in \cite{Arikan} that for any B-DMC $W$,
\be{Z-I}
1 - I(W) \leq Z(W) \leq \sqrt{1 - I(W)^2}
\ee
where $I(W)$ is the symmetric capacity of $W$. 

Let $W^N$ denote the channels that results from $N$ independent copies of $W$ i.e. the channel
$\bigl\langle\{0,1\}^N,\sY^N\hspace{-1pt},W^N\bigr\rangle$
given by\vspace{-1.00ex}
\be{Wn}
W^N\kern-0.5pt(y^N_1|x^N_1) 
\,\ \deff\,\
\prod_{i=1}^N W(y_i|x_i)
\vspace{-0.25ex}
\ee
where $x^N_1 \hspace{1pt}\,{=}\, (x_1,x_2,\dots,x_N)$ and
$y^N_1 \hspace{1pt}\,{=}\, (y_1,y_2,\dots,y_N)$. Then the \emph{combined} channel
\smash{$\bigl\langle\{0,1\}^N,\sY^N\hspace{-1pt},\widetilde{W}\bigr\rangle$} is defined
with transition probabilities
given by
\be{Wtilde}
\widetilde{W}(y^N_1|u^N_1) 
\,\ \deff\,\
W^N\kern-1pt\bigl(y^N_1\hspace{1pt}{\bigm|}\hspace{1pt}u^N_1\hspace{1pt} G_N\bigr)
\kern1pt = \kern2pt
W^N\kern-1pt
\bigl(y^N_1\hspace{1pt}{\bigm|}\hspace{1pt}u^N_1\hspace{1pt} R_N \Gn \bigr)
\ee
This is the channel that the random vector
$(U_1,U_2,\dots,U_N)$ observes through the polar transformation defined earlier.
Arikan~\cite{Arikan} also defines 
the $i$-th bit-channel
$\smash{\bigl\langle\{0,1\},\sY^N{\times}\{0,1\}^{i-1},W^{(i)}_N\bigr\rangle}$, for $i = 1,2,\dots,N$,
as follows. Let $u^i_1 = (u_1,u_2,\dots,u_i)$ denote a binary vector
of length $i$. For $i = 0$, this is the empty string. 
Then
\be{Wi-def}
W^{(i)}_N\bigl( y^N_1,u^{i-1}_1 | \hspace{1pt}u_i)
\deff
\frac{1}{2^{N-1}}\hspace{-10pt}
\sum_{u_{i+1}^N \in \{0,1\}^{N-i}} \hspace{-12pt}
\widetilde{W}\Bigl(y^N_1\hspace{1pt}{\bigm|}\hspace{1pt}
(u^{i-1}_1,u_i,u_{i+1}^N) \Bigr)
\ee
It can be shown 
that $W^{(i)}_N\bigl( y^N_1,u^{i-1}_1 | \hspace{1pt}u_i)$
is indeed the probability of the event that
$(Y_1,Y_2,\dots,Y_N) \hspace{-1pt}= y^N_1\/$ and 
$(U_1,U_2,\dots,U_{i-1}) \hspace{-1pt}= u^{i-1}_1$
given the event $U_i = u_i$, provided
$U^N_1$ is a priori uniform over $\{0,1\}^N$. Intuitively, this is the channel that bit $u_i$ observes under Arikan's \emph{successive cancellation decoding}, described later. 

The $N$ bit-channels are partitioned
into \emph{good channels} and \emph{bad channels} as follows ~\cite{AT,Korada}. Let 
\smash{$[N] \ \raisebox{-0.2ex}{\deff}\ \{1,2,\dots,N\}$}\Strut{2.15ex}{0ex}
and let $\beta \,{<}\, \shalf$ be a fixed positive constant.
Then the index sets of the good and bad channels are given by
\begin{eqnarray}
\label{good-def}
\cG_N(W,\beta)
&\hspace*{-6pt}{\deff}\hspace*{-6pt}&
\left\{\, i \in [N] ~:~ Z(W^{(i)}_N) < 2^{-N^{\beta}}\!\!/N \hspace{1pt}\right\}
\\[0.250ex]
\label{bad-def}
\cB_N(W,\beta)
&\hspace*{-6pt}{\deff}\hspace*{-6pt}&
\left\{\, i \in [N] ~:~ Z(W^{(i)}_N) \ge 2^{-N^{\beta}}\!\!/N \hspace{1pt}\right\}
\\[-2.75ex]
\nonumber
\end{eqnarray}

\begin{theorem}
\label{thm1}
~\cite{Arikan,AT} For any binary symmetric memoryless (BSM) channel $W$ and any constant $\beta \,{<}\, \shalf$ we have
$$
\lim_{N \to \infty} \frac{\left|\cG_N(W,\beta)\right|}{N}
\,=\, 
\cC(W)\vspace{1.5ex}
$$
\end{theorem}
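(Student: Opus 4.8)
The plan is to recast the deterministic counting problem as a statement about a random walk on the synthesized bit-channels, and then combine martingale convergence with the sharp decay of the Bhattacharyya parameter. First I would set up the tree of channel transformations: a single polarization step sends a channel $W$ to two channels, the $N=2$ bit-channels $W^{(1)}_2$ and $W^{(2)}_2$, which I abbreviate $W^-$ and $W^+$; composing $n$ such steps produces the $N=2^n$ bit-channels $W^{(i)}_N$. I then introduce i.i.d.\ uniform bits $B_1,B_2,\dots\in\{-,+\}$ and let $W_n$ be the random channel reached along the path $B_1,\dots,B_n$. Since the path selects one of the $N$ bit-channels uniformly at random, we have the identity $|\cG_N(W,\beta)|/N = \Pr\!\big[Z(W_n) < 2^{-N^{\beta}}/N\big]$, so the whole theorem reduces to evaluating the limit of this probability.

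The second step is basic polarization. I would record the single-step recursions $I(W^-)+I(W^+)=2I(W)$ for the symmetric capacity and $Z(W^+)=Z(W)^2$, $Z(W)\le Z(W^-)\le 2Z(W)-Z(W)^2$ for the Bhattacharyya parameter. The capacity identity makes $I_n \deff I(W_n)$ a bounded martingale, so $I_n\to I_\infty$ almost surely with $\mathbb{E}[I_\infty]=I(W)=\cC(W)$ (for a symmetric $W$ the symmetric capacity equals the capacity). The second recursion makes $Z_n\deff Z(W_n)$ a bounded supermartingale, hence $Z_n\to Z_\infty$ a.s.; since at a $+$ step $Z_n-Z_{n+1}\ge Z_n(1-Z_n)$ and these increments must vanish in the limit, we get $Z_\infty(1-Z_\infty)=0$, i.e.\ $Z_\infty\in\{0,1\}$ a.s. Substituting $Z_\infty\in\{0,1\}$ into the bounds \eqref{Z-I} forces $I_\infty\in\{0,1\}$ with $\Pr[Z_\infty=0]=\Pr[I_\infty=1]=\mathbb{E}[I_\infty]=\cC(W)$. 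This already establishes the weak form $\frac1N\big|\{i:Z(W^{(i)}_N)<\delta\}\big|\to\cC(W)$ for every fixed $\delta>0$.

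The third and hardest step is the rate refinement upgrading the fixed threshold $\delta$ to the doubly-exponential threshold $2^{-N^{\beta}}/N=2^{-2^{n\beta}}/2^n$ for every $\beta<\shalf$. Here I would track $T_n\deff-\log_2 Z_n$ on the event $\{Z_\infty=0\}$: a $+$ step doubles $T_n$ (since $Z_{n+1}=Z_n^2$), while a $-$ step decreases it by at most $1$ (since $Z_{n+1}\le 2Z_n$ gives $T_{n+1}\ge T_n-1$). Using Step~2, I would first fix a time $m$ large enough that $Z_m$ lies below a small constant with probability within $\varepsilon$ of $I(W)$. Over the remaining $n-m$ steps the number of squaring steps concentrates around $(n-m)/2$ by a Chernoff bound on the i.i.d.\ bits $B_j$; because each squaring step doubles $T_n$ while the at-most-$n$ unit decrements are negligible against this growth, with probability tending to $I(W)$ one obtains $T_n\ge 2^{n\beta}+n$, i.e.\ $Z_n\le 2^{-2^{n\beta}}/N$, for any $\beta<\shalf$. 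The exponent $\shalf$ is exactly the bottleneck: it is the almost-sure fraction of squaring steps fixed by the law of large numbers, and controlling both the fluctuations of this fraction and the ordering of the $\pm$ steps is the delicate part I expect to be the main obstacle. Combining this estimate with the reduction of Step~1 yields $|\cG_N(W,\beta)|/N\to\cC(W)$.
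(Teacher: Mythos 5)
The paper itself does not prove \Tref{thm1}: it imports the statement verbatim from \cite{Arikan} (basic polarization) and \cite{AT} (the rate $2^{-N^\beta}$), so the only meaningful comparison is against those proofs, and your proposal is a reconstruction of exactly that route. Your Steps 1 and 2 are correct and essentially complete: the identification of $|\cG_N(W,\beta)|/N$ with a path probability for the $\{-,+\}$-valued random walk on synthesized channels (the bit-reversal permutation only relabels indices, so it does not affect the count), the martingale property of $I_n$, the supermartingale property of $Z_n$, the extraction of $Z_\infty\in\{0,1\}$ from the vanishing of the increments $Z_n(1-Z_n)$ at squaring steps, and the transfer $\Pr[Z_\infty=0]=\mathbb{E}[I_\infty]=\cC(W)$ via \eq{Z-I} --- this is Arikan's argument.

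The genuine gap is in Step 3, which is the entire content of \cite{AT}, and your sketch of it does not go through as stated. The flaw is the claim that ``the at-most-$n$ unit decrements are negligible against this growth.'' At the point where you invoke it, $T_m=-\log_2 Z_m$ is only a \emph{constant} $c$ (you have only forced $Z_m$ below a fixed constant), and then the decrements are not negligible: the \emph{ordering} of the $\pm$ steps matters, and a run of $c$ minus steps --- which occurs with constant probability right after time $m$, and almost surely eventually --- drives $T$ to near $0$, after which doublings are useless ($2\cdot 0=0$; in terms of $Z$, squaring a number near $1$ keeps it near $1$). In the worst ordering the bound you can extract is $T_n\ge \bigl(c-(n-m)/2\bigr)\,2^{k}$ with $k$ the number of squaring steps, which is vacuous as soon as $n-m>2c$; no Chernoff bound on the \emph{number} of squaring steps repairs this, because the failure is about order, not count. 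The fix --- and what \cite{AT} actually does --- is a two-stage bootstrap. First, condition on the tail event $\{Z_n\le\zeta\ \text{for all } n\ge m\}$ with $\zeta<1/2$ (this is what basic polarization gives you; conditioning only on $Z_m\le\zeta$ is not enough). On this event every squaring step increases $T$ by $T_n\ge\log_2(1/\zeta)>1$ (since $T_{n+1}=2T_n$) while a minus step costs at most $1$, so the increments of $T$ have strictly positive expected drift, and a Chernoff/Borel--Cantelli argument yields the intermediate, singly exponential estimate $Z_n\le 2^{-c_1 n}$ (equivalently $T_n\ge c_1 n$) for all large $n$ with probability close to $\cC(W)$. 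Only now are the unit decrements genuinely negligible --- each costs a relative factor $(1-1/(c_1 n))$, a polynomial loss in total --- so the count of squaring steps, at least $(1/2-\delta)(n-m)$ of them, gives $T_n\ge 2^{n(1/2-\delta')}$, which beats $2^{n\beta}+n$ for every $\beta<1/2$ and closes the proof. You correctly flagged this step as the delicate one, but the one-stage argument you propose for it is precisely the step that fails.
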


\Tref{thm1} readily leads to a construction of capacity-achieving \emph{polar codes}. The idea is to transmit the information bits over the good bit-channels while fixing the input to the bad 
bit-channels to a priori known values, say zeros. 
Formally, each subset $\cA$ of $[N]$ of size $|\cA| = k$ specifies
a \emph{polar code} $\C_N(\cA)$ of rate $k/N$.
$\C_N(\cA)$ is actually a linear code with length $N$ and dimension $k$. 
The generator matrix of $\C_N(\cA)$ is a $k \times N$ matrix that consists of rows of $G_N$ corresponding to
the elements of $\cA$. 

Arikan also introduces the successive cancellation decoding for polar codes which leads to the following key theorem on the encoder-decoder pair of polar codes. This theorem is (the second part of) 
Proposition\,2 of Arikan~\cite{Arikan}.
\begin{theorem}
\label{thm2}
Let $W$ be a BSM channel and let $\cA$ be an
arbitrary subset of $[N]$ of size $|\cA| = k$.
Suppose that a message $\bU$ is chosen 
uniformly at random from $\{0,1\}^k$,
encoded as a codeword of\/ $\C_N(\cA)$, and transmitted
over $W$. Then the probability that the channel
output is {not} decoded to $\bU$ under successive
cancellation decoding satisfies
\be{eq3}
\Pr \bigl\{\hU \ne \bU \bigr\} 
\, \le \,
\sum_{i \in \cA} \!Z(W^{(i)}_N) 
\ee
\end{theorem}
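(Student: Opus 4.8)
The plan is to follow Arikan's original two-step argument: first reduce the block-error probability to a union bound over the information positions, then bound each individual term by the corresponding Bhattacharyya parameter. First I would recall how successive cancellation decoding operates. The decoder produces estimates $\hatu_1,\hatu_2,\dots,\hatu_N$ one at a time: at a frozen position $i \notin \cA$ it sets $\hatu_i$ to the known (zero) value, while at an information position $i \in \cA$ it applies the likelihood decision
$$
\hatu_i =
\begin{cases}
0 & \text{if } W^{(i)}_N(y^N_1, \hatu^{i-1}_1 \mid 0) \geq W^{(i)}_N(y^N_1, \hatu^{i-1}_1 \mid 1),\\
1 & \text{otherwise.}
\end{cases}
$$
The block-error event is $\cE = \bigl\{\hU \ne \bU\bigr\}$, and since the frozen bits are decoded correctly by construction, this coincides with the event that the decoded vector $\hatu^N_1$ differs from the transmitted $u^N_1$.

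Next I would introduce, for each $i \in \cA$, the genie-aided error event
$$
\cE_i = \Bigl\{(u^N_1, y^N_1) : W^{(i)}_N(y^N_1, u^{i-1}_1 \mid u_i) \leq W^{(i)}_N(y^N_1, u^{i-1}_1 \mid u_i \oplus 1)\Bigr\},
$$
i.e.\ the set of input/output pairs on which the decision at position $i$ would be wrong (ties counted as errors) \emph{if all previous bits had been decoded correctly}. The key combinatorial observation is that whenever a block error occurs there is a first erroneous position $i$; this position must lie in $\cA$ because frozen bits never err, and at it the past is by definition decoded correctly, so $\hatu^{i-1}_1 = u^{i-1}_1$ while $\hatu_i \ne u_i$. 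Checking both cases $u_i \in \{0,1\}$ against the decision rule shows $(u^N_1,y^N_1) \in \cE_i$. Hence $\cE \subseteq \bigcup_{i \in \cA} \cE_i$, and the union bound gives $\Pr(\cE) \leq \sum_{i \in \cA}\Pr(\cE_i)$.

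Finally I would bound each $\Pr(\cE_i)$ by $Z(W^{(i)}_N)$. Writing $\Pr(\cE_i)$ as a sum against the measure $\tfrac{1}{2^N}\widetilde{W}(y^N_1\mid u^N_1)$ and summing out $u^N_{i+1}$ via the definition \eq{Wi-def} leaves $\tfrac{1}{2}\sum_{\cE_i} W^{(i)}_N(y^N_1, u^{i-1}_1 \mid u_i)$. On $\cE_i$ one has $a \leq \sqrt{ab}$ with $a = W^{(i)}_N(\,\cdot\,\mid u_i)$ and $b = W^{(i)}_N(\,\cdot\,\mid u_i \oplus 1)$; substituting and then summing the resulting symmetric expression over $u_i \in \{0,1\}$ produces a factor $2$ that cancels the $\tfrac12$, leaving exactly $\sum_{(y^N_1,u^{i-1}_1)} \sqrt{W^{(i)}_N(y^N_1,u^{i-1}_1\mid 0)\,W^{(i)}_N(y^N_1,u^{i-1}_1\mid 1)} = Z(W^{(i)}_N)$, where $(y^N_1,u^{i-1}_1)$ plays the role of the bit-channel output. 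Combining with the union bound yields \eq{eq3}.

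The main obstacle will be the careful bookkeeping in this last step: matching the normalization $\tfrac{1}{2^{N-1}}$ inside \eq{Wi-def} against the uniform prior $\tfrac{1}{2^N}$ on $\bU$, and verifying that the symmetry of the square root over $u_i$ supplies precisely the factor that removes the $\tfrac12$, so the bound emerges as $Z(W^{(i)}_N)$ rather than $\tfrac12 Z(W^{(i)}_N)$. The first step is conceptually the crux—choosing the definition of $\cE_i$ (with ties counted as errors) so that the first-error decomposition is clean—but once stated it is combinatorially routine.
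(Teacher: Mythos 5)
Your proposal is correct and follows essentially the same argument as the paper, which proves this statement only by citation to Proposition\,2 of Arikan~\cite{Arikan}: the first-error (genie-aided) decomposition $\cE \subseteq \bigcup_{i\in\cA}\cE_i$, the union bound, and the bound $\Pr(\cE_i)\le Z(W^{(i)}_N)$ via the inequality $a\le\sqrt{ab}$ on $\cE_i$ are precisely Arikan's original steps. The bookkeeping you flag (the $\tfrac{1}{2^{N-1}}$ normalization in \eq{Wi-def} against the uniform prior, and the factor $2$ from summing the symmetric square root over $u_i$) works out exactly as you describe, so the bound is $Z(W^{(i)}_N)$ with no stray factor of $\tfrac12$.
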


\begin{corollary}
\label{cor1}
For any $\beta \,{<}\, \shalf$ and any BSM channel $W$, the polar code of length $N$ associated with the set of good bit-channels $\cG_N(W,\beta)$ defined in \eq{good-def} approaches the capacity of $W$. Furthermore, the probability of frame error under successive cancellation decoding is less than $2^{-N^{\beta}}$. 
\end{corollary}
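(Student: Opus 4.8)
The plan is to treat the two assertions separately, each of which follows almost at once from a result already established. First I would verify the rate claim. The code $\C_N(\cG_N(W,\beta))$ has dimension $|\cG_N(W,\beta)|$ by construction, hence rate $|\cG_N(W,\beta)|/N$. \Tref{thm1} states exactly that this ratio tends to $\cC(W)$ as $N \to \infty$; since $W$ is symmetric its symmetric capacity coincides with its Shannon capacity, so the code rate approaches capacity and the first assertion is immediate.

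For the frame-error bound I would invoke \Tref{thm2} with the choice $\cA = \cG_N(W,\beta)$, which gives
$$
\Pr\bigl\{\hU \ne \bU\bigr\} \ \le\ \sum_{i \in \cG_N(W,\beta)} Z\bigl(W^{(i)}_N\bigr).
$$
Every summand is controlled by the defining inequality \eq{good-def} of the good set, namely $Z(W^{(i)}_N) < 2^{-N^{\beta}}\!/N$ for each $i \in \cG_N(W,\beta)$. Substituting this uniform per-channel bound and using the trivial estimate $|\cG_N(W,\beta)| \le N$ then yields
$$
\Pr\bigl\{\hU \ne \bU\bigr\} \ <\ |\cG_N(W,\beta)| \cdot \frac{2^{-N^{\beta}}}{N} \ \le\ 2^{-N^{\beta}},
$$
which is the claimed bound.

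I do not expect a genuine obstacle inside the corollary itself: all of the analytic weight sits in \Tref{thm1}, which captures the polarization phenomenon, and in the Bhattacharyya union bound of \Tref{thm2}. The only point that requires a moment's care is the bookkeeping of the factor $1/N$ appearing in the threshold of \eq{good-def}. It is included precisely so that summing over the (up to $N$) good indices cancels the $1/N$ and leaves the exponential factor $2^{-N^{\beta}}$ untouched; pairing that $1/N$ against the crude cardinality bound $|\cG_N(W,\beta)| \le N$ is all that is needed, and the exponent $N^{\beta}$ survives intact.
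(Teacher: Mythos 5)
Your proof is correct and follows exactly the route the paper intends: the paper states \Cref{cor1} without an explicit proof, leaving it as an immediate consequence of \Tref{thm1} (rate) and \Tref{thm2} combined with the threshold $Z(W^{(i)}_N) < 2^{-N^{\beta}}\!/N$ in \eq{good-def} and the bound $|\cG_N(W,\beta)| \le N$ (error probability). Your observation about the role of the $1/N$ factor in the threshold is precisely the bookkeeping that makes the corollary work.
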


\section{Compound Polar Codes for multi-channels}
\label{sec:three}

In this section, we start with explaining the model for multi-channels. Then we describe our compound polar construction for the case of $2$-multi-channels. We extend compound polar code to the case of $l$ parallel channels and prove the capacity-achieving property. Also, the successive cancellation decoding and its complexity for compound polar codes is discussed. 

\subsection{Multi-channels}

We consider the following model for a multi-channel consisting of several binary-input discrete memoryless channels. Let  $W_i : \sX \rightarrow \sY_i$, for $i=1,2,\dots,l$, denote the $l$ given B-DMCs (indeed $\sX = \left\{0,1\right\}$). Then the corresponding $l$-multi-channel $(W_1.W_2\dots W_l) : \sX^l \rightarrow \sY_1 \times \sY_2 \times \dots \times \sY_l$ is another DMC whose transition probability for $x_1^l = (x_1,x_2,\dots,x_l) \in \sX^l$ and $y_1^l = (y_1,y_2,\dots,y_l) \in \sY_1 \times \sY_2 \times \dots \times \sY_l$ is given by:
$$
(W_1.W_2\dots W_l) \bigl(y_1^l | x_1^l\bigr) = \prod_{i=1}^l W_i(y_i|x_i)
$$ 
In fact, each binary sequence of length $l$ is transmitted through this multi-channel in such a way that each bit is transmitted over one of the $l$ channels. In general, for any $N$ which is a multiple of $l$, a sequence of $N$ bits is transmitted over this multi-channel in such a way that each channel carries $N/l$ bits. It is known to both transmitter and receiver that which channel carries which bits in the sequence. An arbitrary interleaver and deinterleaver can be employed in the transmitter and the receiver and thus, the ordering of channels does not matter. 

In this section, we first explain a straightforward scheme which encodes and decodes separately over the underlying channels. Then we discuss our proposed scheme of compound polar codes and prove the channel polarization theorem for this scheme.

\subsection{A straightforward construction}

One straightforward solution for constructing polar code over a multi-channel is to encode the information separately over the underlying binary-input channels using polar encoders corresponding to each channel. Suppose that a set of $l$ channels $W_1,W_2,\dots,W_l$ is given. We want to construct a scheme of length $N$ and rate $R$ for transmission over this set of channels such that all the channels are used equally $N/l$ times. Also, we require that $R$ approaches the average capacity of all the channels as $N$ goes to infinity. For $i = 1,2,\dots,l$, we construct a polar code of length $N/l$ with rate $R_i$ to be transmitted over $W_i$. Let 
$$
R = \frac{1}{l} (R_1 + R_2 + \dots + R_l)
$$
Then the scheme of length $N$ and rate $R$ is as follows. Given the input sequence of $NR$ bits, split it into $l$ chunks of size $NR_i/l$, for $i = 1, 2, \dots, l$. Then encode the $i$-th chunk using the $i$-th polar encoder and transmit the encoded sequence over $W_i$. There are also $l$ separated decoders at the receiver to decode the output of each of the channels separately to get the $i$-th transmitted chunk. We call this scheme the \emph{separated scheme}. Assuming that all the underlying channels are BSM channels, the rate $R_i$ of the constructed polar code over $W_i$ approaches the capacity of $W_i$ for all $i$'s. Therefore $R$ approaches the average of the capacities of $W_i$'s. 

\subsection{Compound polar transformation over $2$-multi-channels}

In this section, we propose our unified scheme for construction of polar codes over $2$-multi-channels. From a practical point of view, it is desirable to have only one encoder and one decoder to reduce the hardware complexity. Also, by combining all the channels together and sending one single codeword, efficiently designed for this multi-channel, we achieve a better trade-off between the rate and probability of error in the whole scheme. 

Suppose that a multi-channel with two constituent B-DMC channels $W_1$ and $W_2$ is given. The building block of our compound polar transformation is shown in Figure\,\ref{building_block}. 

\begin{figure}[h]
\begin{center}
\includegraphics[width=2in]{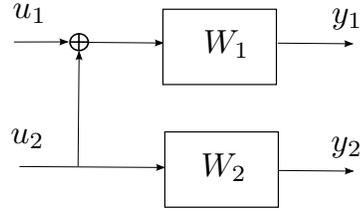}
\caption{The proposed building block of our scheme}
\label{building_block}
\end{center}
\end{figure}

We actually combine the two channels at the first step. Then the recursion is applied to this combined channel exactly same as polar codes. More precisely, let $\cH$ denote the channel with input $u_1$ and $u_2$ and output $y_1$ and $y_2$ as shown in Figure\,\ref{building_block}. Then there exist a permutation $\pi$, such that applying $G^{\otimes (n-1)}$ to $2^{n-1}$ independent copies of $\cH$ is equivalent to the transformation shown in Figure\,\ref{scheme}. The permutation block $\pi$ is designed in such a way that the first half of encoded block is transmitted through $W_1$ and the second half through $W_2$. The whole transformation from $u_1^N$ to $Y_1^N$ is called the compound polar transformation.

\begin{figure}[h]
\begin{center}
\includegraphics[width=\linewidth]{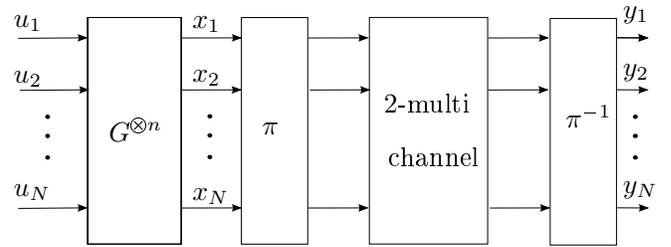}
\caption{The proposed scheme for $2$-multi-channels with length $N$}
\label{scheme}
\end{center}
\end{figure}

\subsection{Construction of capacity achieving compound polar codes}

The building block shown in Figure\,\ref{building_block} can split into two bit-channels by generalizing the definition of channel combining suggested in \cite{Arikan}. Suppose that $u_1$ and $u_2$ are samples of two independent uniform binary random variables $U_1$ and $U_2$, respectively. Notice that by applying the chain rule to $I(U_1^{2};Y_{1}^{2})$ we get
\begin{align*}
I(U_1^{2};Y_{1}^{2}) &= I(U_1;Y_1^{2}) + I(U_2;Y_1^2 | U_1) \\
& = I(U_1;Y_1^2) + I(U_2;Y_1^2,U_1)
\end{align*}
where the last equality follows since $U_1$ and $U_2$ are assumed to be independent. The term $I(U_1,Y_1^2)$ can be interpreted as the mutual information of the channel between $U_1$ and the output $Y_1^2$, where $U_2$ is considered as noise. Let us denote this channel by $W_1 \boxcoasterisk W_2$. Formally, for any two B-DMCs $W_1 : \sX \rightarrow \sY_1$ and $W_2 : \sX \rightarrow \sY_2$ (indeed $\sX = \left\{0,1\right\}$), let $W_1 \boxcoasterisk W_2 : \sX \rightarrow \sY_1 \times \sY_2$ denote another B-DMC whose transition probability for any $(y_1,y_2) \in \sY_1 \times \sY_2$ and $u \in \sX$ is given by
\be{operation1}
W_1 \boxcoasterisk W_2 (y_1,y_2 | u) = \frac{1}{2} \sum_{x \in \sX} W_1(y_1 | u \oplus x) W_2(y_2|u)
\ee
Similarly, the term $I(U_2;Y_1^2,U_1)$ can be interpreted as the mutual information of the channel between $U_2$ and $Y_1^2$ when $U_1$ is available at the decoder. Formally, for any two B-DMCs $W_1 : \sX \rightarrow \sY_1$ and $W_2 : \sX \rightarrow \sY_2$, let $W_1 \circledast W_2 : \sX \rightarrow \sY_1 \times \sY_2 \times \sX$ denote another B-DMC whose transition probability for any $(y_1,y_2) \in \sY_1 \times \sY_2$ and $ x, u \in \sX$ is given by
\be{operation2}
W_1 \circledast W_2 (y_1,y_2,x | u) = \frac{1}{2} W_1(y_1 | u \oplus x) W_2(y_2|u)
\ee 
The channels $W_1 \boxcoasterisk W_2$ and $W_1 \circledast W_2$ are depicted in Figure\,\ref{building_bit_channel}.

\begin{figure}[h]
\begin{center}
\includegraphics[width=\linewidth]{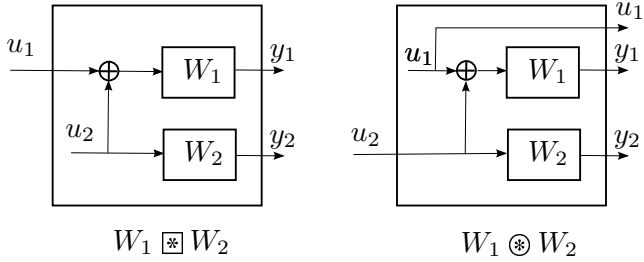}
\caption{The bit-channels of the proposed building block}
\label{building_bit_channel}
\end{center}
\end{figure}

The individual bit-channels can be defined for this compound polar transformation with some modification to the definition of bit-channels for original polar transformation in \eq{Wi-def}. Let $\widetilde{W}$ denote the channel from $u^N_1$ to $y^N_1$ in Figure\,\ref{scheme}. Then the individual bit-channels $(W_1.W_2)^{(i)}_N$ are defined as follows. For $i = 1, 2,\dots,N$,
\be{Wi-def-new}
\begin{split}
&(W_1.W_2)^{(i)}_N\bigl( y^N_1,u^{i-1}_1 | \hspace{1pt}u_i)\\
&\,\ \deff \,\
\frac{1}{2^{N-1}}\hspace{-5pt}
\sum_{u_{i+1}^N \in \{0,1\}^{N-i}} \hspace{-12pt}
\widetilde{W}\Bigl(y^N_1\hspace{1pt}{\bigm|}\hspace{1pt}
(u^{i-1}_1,u_i,u_{i+1}^N) \Bigr)
\end{split}
\ee
We also define the good bit-channels and bad bit-channels same as before i.e. for any $\beta \,{<}\, \shalf$ and $N = 2^n$
\begin{equation}
\begin{split}
\label{good-def2}
&\cG_N(W_1, W_2, \beta)\\
&{\deff}
\left\{\, i \in [N] ~:~ Z((W_1.W_2)^{(i)}_N) < 2^{-N^{\beta}}\!\!/N \hspace{1pt}\right\}
\\[0.250ex]
&\cB_N(W_1, W_2, \beta)\\
&{\deff}
\left\{\, i \in [N] ~:~ Z((W_1.W_2)^{(i)}_N) \ge 2^{-N^{\beta}}\!\!/N \hspace{1pt}\right\}
\\[-2.75ex]
\nonumber
\end{split}
\end{equation}

We show that the compound polar transformation shown in Figure\,\ref{scheme} is equivalent to two separated polar transformations of length $N/2$ for $W_1 \boxcoasterisk W_2$ and $W_1 \circledast W_2$ independently. Therefore, the channel polarization theorem can be established for this proposed transformation accordingly. This is proved next.
The following lemma is needed to establish the proof of \Tref{thm3}.
\begin{lemma}
\label{sum-capacity}
For any two BSM channels $W_1$ and $W_2$,
$$
\cC(W_1 \boxcoasterisk W_2) + \cC(W_1 \circledast W_2) = \cC(W_1) + \cC(W_2)
$$
\end{lemma}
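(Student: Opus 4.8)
The plan is to prove this ``conservation of capacity'' identity by a short chain of mutual-information equalities, exactly mirroring Arikan's observation that $I(W^-)+I(W^+)=2I(W)$ but now with two distinct constituent channels. First I would let $U_1,U_2$ be independent uniform bits and, following the building block of Figure~\ref{building_block}, set $X_1=U_1\oplus U_2$ and $X_2=U_2$, so that $Y_1$ is the output of $W_1$ on input $X_1$ and $Y_2$ is the output of $W_2$ on input $X_2$. Because the constituent channels act independently, the joint law factors as $p(u_1,u_2,y_1,y_2)=\tfrac14\,W_1(y_1\mid u_1\oplus u_2)\,W_2(y_2\mid u_2)$, which is the common ground on which both sides of the identity will be evaluated.

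For the left-hand side, I would first check that the derived channels defined in \eq{operation1} and \eq{operation2} are exactly the channels seen by $U_1$ and by $U_2$ in this setup. Marginalizing over $U_2$ gives $p(y_1^2\mid u_1)=\tfrac12\sum_{u_2}W_1(y_1\mid u_1\oplus u_2)W_2(y_2\mid u_2)$, matching $W_1\boxcoasterisk W_2$, while $p(y_1^2,u_1\mid u_2)=\tfrac12 W_1(y_1\mid u_1\oplus u_2)W_2(y_2\mid u_2)$ matches $W_1\circledast W_2$. Since the symmetric capacity of a channel is its input-output mutual information under a uniform input, this identifies $\cC(W_1\boxcoasterisk W_2)=I(U_1;Y_1^2)$ and $\cC(W_1\circledast W_2)=I(U_2;Y_1^2,U_1)$. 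The chain rule together with the independence of $U_1$ and $U_2$ (so that $I(U_2;Y_1^2\mid U_1)=I(U_2;Y_1^2,U_1)$, as already recorded in the text) then yields $\cC(W_1\boxcoasterisk W_2)+\cC(W_1\circledast W_2)=I(U_1^2;Y_1^2)$.

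For the right-hand side, the key point is that $(u_1,u_2)\mapsto(x_1,x_2)=(u_1\oplus u_2,u_2)$ is a bijection of $\{0,1\}^2$, so the information content is unchanged and $I(U_1^2;Y_1^2)=I(X_1^2;Y_1^2)$. Now I would exploit the product structure: since $X_1,X_2$ are independent uniform and the pair $(X_1,Y_1)$ is independent of $(X_2,Y_2)$, the joint mutual information splits additively, $I(X_1^2;Y_1^2)=I(X_1;Y_1)+I(X_2;Y_2)$. As each $X_i$ is uniform, these two terms are precisely $\cC(W_1)$ and $\cC(W_2)$, completing the chain.

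The argument is essentially routine information theory, so I expect no deep obstacle; the only places demanding care are the two bookkeeping steps. The first is verifying that the transition probabilities in \eq{operation1} and \eq{operation2} coincide with the conditional laws $p(y_1^2\mid u_1)$ and $p(y_1^2,u_1\mid u_2)$ under the building-block coupling, which pins down the left-hand side. The second is justifying the additive split of $I(X_1^2;Y_1^2)$; this needs the factorization of the multi-channel, i.e.\ that $Y_1$ depends only on $X_1$ and $Y_2$ only on $X_2$ with independent inputs. If $\cC$ is read as Shannon rather than symmetric capacity, one extra remark is required: the derived channels $W_1\boxcoasterisk W_2$ and $W_1\circledast W_2$ are themselves symmetric, so a uniform input is capacity-achieving and the mutual-information expressions above indeed equal their capacities.
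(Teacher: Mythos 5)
Your proposal is correct and follows essentially the same route as the paper's proof: both identify $\cC(W_1\boxcoasterisk W_2)+\cC(W_1\circledast W_2)$ with $I(U_1^2;Y_1^2)$ via the chain rule and the independence of $U_1,U_2$, equate that quantity with $\cC(W_1)+\cC(W_2)$ using the invertibility of the building-block map and the product structure of the multi-channel, and invoke symmetry of all four channels so that symmetric capacity equals capacity. Your write-up simply makes explicit two bookkeeping steps (matching the conditional laws to \eq{operation1}--\eq{operation2}, and the bijection argument for $I(U_1^2;Y_1^2)=I(X_1^2;Y_1^2)$) that the paper leaves implicit.
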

\begin{proof}
Let $U_1$ and $U_2$ be two independent uniform binary random variables. Let $Y_1$ and $Y_2$ be the outputs of the channels $W_1$ and $W_2$ with inputs $U_1 + U_2$ and $U_2$ respectively, as depicted in Figure\,\ref{building_block}. Then
\begin{align*}
\cC(W_1) + \cC(W_2) &= I(U_1+U_2;Y_1) + I(U_2;Y_2) = I(U_1^{2};Y_{1}^{2})\\
 &= I(U_1;Y_1^{2}) + I(U_2;Y_1^2 | U_1) \\
 &= I(U_1;Y_1^2) + I(U_2;Y_1^2,U_1) \\
 &= \cC(W_1 \boxcoasterisk W_2) + \cC(W_1 \circledast W_2)
\end{align*}
where we used the fact that $W_1$, $W_2$, $W_1 \boxcoasterisk W_2$ and $W_1 \circledast W_2$ are all symmetric and therefore, the symmetric capacity is equal to the capacity for each of them. 
\end{proof}

\begin{theorem}
\label{thm3}
For any two BSM channels $W_1$ and $W_2$ and any constant $\beta \,{<}\, \shalf$ we have
$$
\lim_{N \to \infty} \frac{\left|\cG_N(W_1, W_2, \beta)\right|}{N}
\,=\, 
\frac{1}{2}\bigl(\cC(W_1) + \cC(W_2)\bigr)\vspace{1.5ex}
$$
\end{theorem}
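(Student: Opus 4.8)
The plan is to reduce \Tref{thm3} to two applications of the standard polarization theorem \Tref{thm1}, exploiting the decomposition of the compound transformation announced just before the statement, and then to invoke \Lref{sum-capacity} to turn the resulting sum of capacities into the average of $\cC(W_1)$ and $\cC(W_2)$.

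First I would make precise the equivalence claimed above: the compound polar transformation of length $N = 2^n$ in Figure\,\ref{scheme} is equivalent to applying the standard Arikan transform $G^{\otimes (n-1)}$ of length $N/2$ independently to $N/2$ copies of $W_1 \boxcoasterisk W_2$ and to $N/2$ copies of $W_1 \circledast W_2$. The building block of Figure\,\ref{building_block} performs a single polarization step converting each independent pair $(W_1, W_2)$ into the bit-channels $\bigl(W_1 \boxcoasterisk W_2,\, W_1 \circledast W_2\bigr)$, exactly as the chain-rule computation leading to \eq{operation2} shows; the permutation $\pi$ only reorders the encoded symbols so that the first half is sent through $W_1$ and the second through $W_2$, and hence does not affect the mutual informations or Bhattacharyya parameters of the synthesized bit-channels. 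After this first step the remaining $n-1$ levels of the recursion act separately on the two synthetic channels, precisely as in Arikan's construction. I would verify this by unrolling the recursive definition of $G^{\otimes n}$ and matching the compound bit-channel definition \eq{Wi-def-new} against Arikan's \eq{Wi-def} applied to $W_1 \boxcoasterisk W_2$ and to $W_1 \circledast W_2$.

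Granting this equivalence, the $N$ bit-channels $(W_1.W_2)^{(i)}_N$ form the disjoint union of the $N/2$ bit-channels of the length-$(N/2)$ polar transform of $W_1 \boxcoasterisk W_2$ and the $N/2$ bit-channels of the length-$(N/2)$ polar transform of $W_1 \circledast W_2$. Because the Bhattacharyya threshold $2^{-N^{\beta}}\!/N$ defining the compound good set is the very one used in \eq{good-def}, the good indices split as
$$
\left|\cG_N(W_1, W_2, \beta)\right|
= \left|\cG_{N/2}(W_1 \boxcoasterisk W_2, \beta)\right| + \left|\cG_{N/2}(W_1 \circledast W_2, \beta)\right|.
$$
Dividing by $N = 2\,(N/2)$ and letting $N \to \infty$, \Tref{thm1} applies to each synthetic channel---both of which are BSM, as discussed below---giving
$$
\lim_{N \to \infty} \frac{\left|\cG_N(W_1, W_2, \beta)\right|}{N}
= \frac{1}{2}\bigl(\cC(W_1 \boxcoasterisk W_2) + \cC(W_1 \circledast W_2)\bigr).
$$
The proof then closes by substituting \Lref{sum-capacity}, which equates the right-hand side with $\frac{1}{2}\bigl(\cC(W_1) + \cC(W_2)\bigr)$.

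The main obstacle is this first step: rigorously establishing the structural equivalence and, in particular, verifying that $W_1 \boxcoasterisk W_2$ and $W_1 \circledast W_2$ are themselves symmetric, so that \Tref{thm1} is genuinely applicable to them. Their symmetry---already used implicitly in the proof of \Lref{sum-capacity}---must be deduced from the symmetry of $W_1$ and $W_2$ by exhibiting, on the enlarged output alphabets $\sY_1 \times \sY_2$ and $\sY_1 \times \sY_2 \times \sX$, the required involution $\pi$ (with $\pi^{-1} = \pi$) realizing $W(\,\cdot\,|0) = W(\pi(\cdot)\,|1)$. Once this symmetry and the recursive equivalence are in hand, I expect the remaining counting argument above to be routine.
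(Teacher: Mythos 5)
Your overall architecture is the same as the paper's: decompose the compound transformation into two independent length-$N/2$ Arikan transforms acting on $\dot{W} = W_1 \boxcoasterisk W_2$ and $\ddot{W} = W_1 \circledast W_2$ (the paper records this as $(W_1.W_2)^{(i)}_N = \dot{W}^{(i)}_{N/2}$ for $i \le N/2$ and $(W_1.W_2)^{(i)}_N = \ddot{W}^{(i-N/2)}_{N/2}$ for $i > N/2$, proved by induction on $n$), then invoke \Tref{thm1} on each synthetic channel and close with \Lref{sum-capacity}. However, your counting step has a genuine gap: the claimed identity
$$
\left|\cG_N(W_1,W_2,\beta)\right| \,=\, \left|\cG_{N/2}(\dot{W},\beta)\right| + \left|\cG_{N/2}(\ddot{W},\beta)\right|
$$
is false in general, because the two sides are defined with \emph{different} Bhattacharyya thresholds. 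Membership in $\cG_N(W_1,W_2,\beta)$ requires $Z\bigl((W_1.W_2)^{(i)}_N\bigr) < 2^{-N^{\beta}}\!/N$, whereas $\cG_{N/2}(\dot{W},\beta)$, i.e.\ \eq{good-def} instantiated at block length $N/2$, requires only $Z\bigl(\dot{W}^{(i)}_{N/2}\bigr) < 2^{-(N/2)^{\beta}}\!/(N/2)$. Since $2^{-N^{\beta}}\!/N < 2^{-(N/2)^{\beta}}\!/(N/2)$, the compound threshold is strictly more demanding: an index $i$ with $2^{-N^{\beta}}\!/N \le Z\bigl(\dot{W}^{(i)}_{N/2}\bigr) < 2^{-(N/2)^{\beta}}\!/(N/2)$ belongs to $\cG_{N/2}(\dot{W},\beta)$ but not to $\cG_N(W_1,W_2,\beta)$. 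Only the inclusion $\cG_N(W_1,W_2,\beta) \cap [N/2] \subseteq \cG_{N/2}(\dot{W},\beta)$ (and its mirror image for $\ddot{W}$) comes for free, so only ``$\le$'' holds in your display; your appeal to "the very same threshold" overlooks that the threshold depends on the block length, which halves when you pass to the synthetic channels.

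The theorem survives, but you need two one-sided arguments in place of the equality. The inclusion above already gives the upper bound: $\left|\cG_N(W_1,W_2,\beta)\right| \le \left|\cG_{N/2}(\dot{W},\beta)\right| + \left|\cG_{N/2}(\ddot{W},\beta)\right|$, and dividing by $N$ and applying \Tref{thm1} yields $\limsup_{N\to\infty} \left|\cG_N(W_1,W_2,\beta)\right|/N \le \frac{1}{2}\bigl(\cC(\dot{W})+\cC(\ddot{W})\bigr)$. (The paper instead proves the upper bound by the chain rule $\frac{N}{2}\bigl(I(W_1)+I(W_2)\bigr) = \sum_{i=1}^N I\bigl((W_1.W_2)^{(i)}_N\bigr)$ combined with $I \ge 1 - Z$ from \eq{Z-I}.) For the lower bound the paper picks $\beta'$ with $\beta < \beta' < \shalf$ and $N$ large enough that $2^{-(N/2)^{\beta'}}\!/(N/2) < 2^{-N^{\beta}}\!/N$; this reverses the inclusion, giving $\left|\cG_N(W_1,W_2,\beta)\right| \ge \left|\cG_{N/2}(\dot{W},\beta')\right| + \left|\cG_{N/2}(\ddot{W},\beta')\right|$, and \Tref{thm1} applied at exponent $\beta'$ supplies the matching lower bound. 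Your insistence on verifying that $\dot{W}$ and $\ddot{W}$ are symmetric (so that \Tref{thm1} genuinely applies) is a valid point, consistent with what the paper uses implicitly in \Lref{sum-capacity}; with that and the threshold repair above, your argument becomes correct.
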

\begin{proof}
For simplicity let $\dot{W}$ denote $W_1 \boxcoasterisk W_2$ and $\ddot{W}$ denote  $W_1 \circledast W_2$. By induction on $n = \log N$, we can show that for $1 \leq i \leq N/2$
\be{thm3-1}
(W_1.W_2)^{(i)}_N = \dot{W}^{(i)}_{N/2}
\ee
and for $N/2 < i \leq N$
\be{thm3-2}
(W_1.W_2)^{(i)}_N = \ddot{W}^{(i-N/2)}_{N/2}
\ee
The base of induction is clear by definition. The induction step follows by the recursive structure of the polar transformation. 

If $N$ is large enough, then we can pick $\beta'$ such that $\beta < \beta' < \shalf$ and
$$
\frac{2^{-(N/2)^{\beta'}}}{N/2} < \frac{2^{-N^{\beta}}}{N}
$$
Then definitions of good bit-channels given in \eq{good-def} and \eq{good-def2} together with \eq{thm3-1} and \eq{thm3-2} imply that
\be{thm3-3}
\left|\cG_N(W_1, W_2, \beta)\right| \geq \left|\cG_{N/2}(\dot{W},\beta')\right| +  \left|\cG_{N/2}(\ddot{W},\beta')\right|
\ee
Then by \Tref{thm1} and \Lref{sum-capacity},
\begin{align*}
&\lim_{N \to \infty} \frac{\left|\cG_N(W_1, W_2, \beta)\right|}{N} \\
& \geq \lim_{N \to \infty} \frac{\left|\cG_{N/2}(\dot{W},\beta')\right|}{N} +
\lim_{N \to \infty} \frac{\left|\cG_{N/2}(\ddot{W},\beta')\right|}{N}\\
&=  \frac{1}{2}\bigl(\cC(\dot{W}) + \cC(\ddot{W})\bigr) =
  \frac{1}{2}\bigl(\cC(W_1) + \cC(W_2)\bigr)
\end{align*}
On the other hand we have
\begin{align}
\frac{N}{2} \bigl(I(W_1) + I(W_2)\bigr) &= \sum^N_{i = 1} I\bigl( (W_1.W_2)^{(i)}_N \bigr) \label{thm3-4}\\
& \geq \sum_{i \in \cG_N(W_1, W_2, \beta)} I\bigl( (W_1.W_2)^{(i)}_N \bigr) \notag \\
& \geq  \sum_{i \in \cG_N(W_1, W_2, \beta)} 1 - Z\bigl( (W_1.W_2)^{(i)}_N \bigr) \label{thm3-5} \\
& \geq \left|\cG_N(W_1, W_2, \beta)\right| - 2^{-N^{\beta}} \label{thm3-6}
\end{align}
\eq{thm3-4} is by the chain rule on the mutual information between the input and output of the scheme shown in Figure\,\ref{scheme}. \eq{thm3-5} follows by \eq{Z-I}. \eq{thm3-6} holds by definition of the set of good bit-channels $\cG_N(W_1, W_2, \beta)$. Therefore,
$$
\lim_{N \to \infty} \frac{\left|\cG_N(W_1, W_2, \beta)\right|}{N} \leq \frac{1}{2}\bigl(\cC(W_1) + \cC(W_2)\bigr)
$$
which completes the proof of theorem.
\end{proof}

The encoding of our scheme is similar to that of polar codes. Let $k = \left|\cG_N(W_1, W_2, \beta)\right|$. Then the polar code associated with the set of good bit-channels $\cG_N(W_1, W_2, \beta)$ is a $(k,N)$ code. The positions corresponding to the indices in $\cG_N(W_1, W_2, \beta)$ carry the information bits and the rest of input bits are frozen to zeros. Then the following theorem and the corollary follows similar to \Tref{thm2} and \Cref{cor1} proved in \cite{Arikan}.
\begin{theorem}
\label{thm_error}
Suppose that a message $\bU$ is chosen uniformly at random from $\left\{0,1\right\}^k$, encoded using polar code associated with a set $\cA \subseteq [N]$ and transmitted over $W_1$ and $W_2$ as described in Figure\,\ref{scheme}. Then the probability that the received word $\bY$ is not decoded to $\bU$ under successive cancellation decoding satisfies
$$
\text{Pr}\left\{\hat{\bU} \neq \bU \right\} \leq \sum_{i \in \cA} Z\bigl((W_1.W_2)^{(i)}_N \bigr)
$$
\end{theorem}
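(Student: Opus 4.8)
The plan is to transcribe the argument Arikan gives for \Tref{thm2}, observing that it never uses the internal structure of the combined channel~$\widetilde{W}$: it relies only on the bit-channel definition and on the meaning of the successive cancellation decoder. Since \eq{Wi-def-new} is the exact analogue of \eq{Wi-def}, the same bound must hold with $W^{(i)}_N$ replaced by $(W_1.W_2)^{(i)}_N$. First I would fix the decoder. Processing positions $i=1,\dots,N$ in order, it sets $\widehat{U}_i = 0$ at every frozen position $i \notin \cA$, and at each information position $i \in \cA$ it forms $\widehat{U}_i = h_i(\bY,\widehat{U}_1^{i-1})$, where the decision function $h_i$ outputs $0$ when $(W_1.W_2)^{(i)}_N(\bY,\widehat{U}_1^{i-1}\,|\,0) \ge (W_1.W_2)^{(i)}_N(\bY,\widehat{U}_1^{i-1}\,|\,1)$ and $1$ otherwise, reusing its own past estimates $\widehat{U}_1^{i-1}$.

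Next I would decompose the block-error event. Let $\cE = \{\widehat{\bU} \neq \bU\}$ and, for $i \in \cA$, let $\cE_i \deff \{ h_i(\bY,U_1^{i-1}) \neq U_i \}$ be the genie-aided event in which the decision at position~$i$ is wrong \emph{when fed the true past bits}. Because the frozen bits are correct by construction, the first erroneous coordinate of any block error lies in $\cA$, and at that coordinate the fed-back past coincides with the truth; hence $\cE \subseteq \bigcup_{i \in \cA} \cE_i$, and the union bound yields $\Pr(\cE) \le \sum_{i \in \cA} \Pr(\cE_i)$. The one ingredient that must be checked here, exactly as in the discussion following \eq{Wi-def}, is the probabilistic reading of \eq{Wi-def-new}: under a uniform prior on $U_1^N$, the quantity $(W_1.W_2)^{(i)}_N(y_1^N,u_1^{i-1}\,|\,u_i)$ equals $\Pr\{\bY = y_1^N,\, U_1^{i-1}=u_1^{i-1} \mid U_i = u_i\}$, so that the joint law factors as $\Pr\{U_1^i = u_1^i,\, \bY = y_1^N\} = \tfrac12\,(W_1.W_2)^{(i)}_N(y_1^N,u_1^{i-1}\,|\,u_i)$.

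With that identity in place I would bound each term by the Bhattacharyya parameter. For fixed $(u_1^{i-1},y_1^N)$ exactly one value of $u_i$ triggers $\cE_i$, namely the one whose likelihood is the smaller of the two, so summing out $u_i$ replaces the likelihood by the minimum and gives
$$
\Pr(\cE_i) \,=\, \frac12 \!\!\sum_{y_1^N,\,u_1^{i-1}} \!\! \min\bigl\{ (W_1.W_2)^{(i)}_N(y_1^N,u_1^{i-1}\,|\,0),\ (W_1.W_2)^{(i)}_N(y_1^N,u_1^{i-1}\,|\,1) \bigr\}.
$$
Applying $\min\{a,b\}\le\sqrt{ab}$ and recognizing the sum over the bit-channel output $(y_1^N,u_1^{i-1})$ as $Z\bigl((W_1.W_2)^{(i)}_N\bigr)$ gives $\Pr(\cE_i) \le \tfrac12 Z\bigl((W_1.W_2)^{(i)}_N\bigr) \le Z\bigl((W_1.W_2)^{(i)}_N\bigr)$; substituting into the union bound finishes the argument.

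The calculation is routine once the framework is set up, so the only place that genuinely deserves care is the probabilistic interpretation of \eq{Wi-def-new}. I expect this to be the main (though minor) obstacle: one must confirm that routing the first half of the coded block through $W_1$ and the second half through $W_2$ via the permutation~$\pi$ of Figure~\ref{scheme} does not disturb this reading. It does not, because \eq{Wi-def-new} touches $\widetilde{W}$ only through marginalization over the future inputs $u_{i+1}^N$, a step blind to how $\widetilde{W}$ is built from $W_1$ and $W_2$; verifying it is therefore word-for-word the standard computation, and the remainder of the proof is the generic successive-cancellation bound, insensitive to the compound structure.
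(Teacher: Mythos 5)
Your skeleton is exactly what the paper has in mind: the paper offers no proof of Theorem~\ref{thm_error} beyond the remark that it ``follows similar to'' Theorem~\ref{thm2} (Arikan's Proposition~2), and your genie-aided first-error decomposition, union bound over $\cA$, and $\min\{a,b\}\le\sqrt{ab}$ step are a faithful transcription of that argument, together with the correct observation that it touches $\widetilde{W}$ only through the bit-channel definition \eq{Wi-def-new} and is therefore insensitive to the compound structure.

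There is, however, one genuine gap, and it is precisely the point that forces the hypothesis ``BSM'' in Theorem~\ref{thm2}. Your identity $\Pr\bigl\{U_1^i=u_1^i,\,\bY=y_1^N\bigr\}=\tfrac12\,(W_1.W_2)^{(i)}_N\bigl(y_1^N,u_1^{i-1}\,\big|\,u_i\bigr)$ holds when \emph{all} coordinates of $U_1^N$ are i.i.d.\ uniform, i.e., when the frozen coordinates are uniformly random as well. In the theorem (and in the paper's definition of the code) the frozen coordinates are fixed to zero, so the law of $U_1^N$ in the actual experiment is uniform only on a coset of $\{0,1\}^N$, and your equality $\Pr(\cE_i)=\tfrac12\sum\min\{\cdots\}$ is then unjustified and false for general B-DMCs. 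What your computation actually bounds is the block error probability \emph{averaged over all $2^{N-k}$ choices of the frozen vector} (Arikan's $P_e(N,K,\cA)$), whereas the statement concerns the all-zeros frozen vector; for non-symmetric constituent channels these two quantities can genuinely differ. The missing step is Arikan's symmetry argument: when $W_1$ and $W_2$ are symmetric (an implicit hypothesis of this section of the paper), one writes $\widetilde{W}(y_1^N\,|\,u_1^N)=\prod_j W_{c(j)}\bigl(y_j\,\big|\,x_j\bigr)$, where $x_1^N$ is the image of $u_1^N$ under the invertible $\Ftwo$-linear compound transformation and $c(j)\in\{1,2\}$ indicates which constituent channel carries coordinate $j$; coordinate-wise application of the output permutations furnished by the symmetry of $W_1$ and $W_2$ then shows, exactly as in \cite{Arikan}, that $\Pr\bigl\{\hat{\bU}\ne\bU\bigr\}$ does not depend on the value of the frozen vector, hence equals the average you bounded. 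Adding that invariance step (or invoking it explicitly from \cite{Arikan}) closes the gap and turns your average-over-cosets bound into the stated fixed-frozen-bits bound.
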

\begin{corollary}
\label{cor2}
For any $\beta \,{<}\, \shalf$ and any two BSM channels $W_1$ and $W_2$, the polar code of length $N$ associated with the set of good bit-channels $\cG_N(W_1, W_2, \beta)$ defined in \eq{good-def2} approaches the average of the capacities of $W_1$ and $W_2$. Furthermore, the probability of frame error under successive cancellation decoding is less than $2^{-N^{\beta}}$. 
\end{corollary}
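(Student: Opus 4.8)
The plan is to obtain both assertions as essentially immediate consequences of the two results that precede the corollary: the rate statement from \Tref{thm3} and the error bound from \Tref{thm_error}. All of the genuine content has already been established there, so the corollary is just the standard bookkeeping step that repackages those results into a statement about the compound polar code $\C_N(\cG_N(W_1,W_2,\beta))$.

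For the rate claim, I would first recall that the code associated with the good set is a $(k,N)$ code with $k = |\cG_N(W_1,W_2,\beta)|$, so its rate is exactly $|\cG_N(W_1,W_2,\beta)|/N$. \Tref{thm3} asserts precisely that this ratio tends to $\frac{1}{2}\bigl(\cC(W_1)+\cC(W_2)\bigr)$ as $N \to \infty$, which is the average of the two capacities. Hence the rate approaches the claimed value and nothing further is required here.

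For the error bound, I would invoke \Tref{thm_error} with the choice $\cA = \cG_N(W_1,W_2,\beta)$, giving
$$
\Pr\bigl\{\hU \neq \bU\bigr\} \ \le\ \sum_{i \in \cG_N(W_1,W_2,\beta)} Z\bigl((W_1.W_2)^{(i)}_N\bigr).
$$
By the definition \eq{good-def2} of the good set, every summand satisfies $Z((W_1.W_2)^{(i)}_N) < 2^{-N^{\beta}}/N$, and there are at most $N$ indices in $[N]$, so the right-hand side is bounded by $N \cdot 2^{-N^{\beta}}/N = 2^{-N^{\beta}}$. The point worth highlighting is that the $1/N$ factor built into the threshold in \eq{good-def2} is exactly what absorbs the number of terms in the union bound, producing the clean estimate $2^{-N^{\beta}}$.

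I expect no real obstacle in the corollary itself; the substance lies entirely in its inputs. In particular, the capacity-conservation identity of \Lref{sum-capacity} together with the bit-channel equivalences \eq{thm3-1}--\eq{thm3-2} are what drive \Tref{thm3}, and \Tref{thm_error} in turn rests on adapting Arikan's successive-cancellation analysis (his Proposition~2) to the compound transform. Once those are in hand, the two displayed bounds combine mechanically to yield both halves of the corollary.
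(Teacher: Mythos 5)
Your proposal is correct and matches the paper's own (implicit) argument: the paper states that \Cref{cor2} follows from \Tref{thm3} and \Tref{thm_error} exactly as Arikan's Corollary follows from his Proposition 2, i.e., the rate claim is \Tref{thm3} verbatim and the error bound comes from applying \Tref{thm_error} with $\cA = \cG_N(W_1,W_2,\beta)$ and using the $2^{-N^{\beta}}\!/N$ threshold to absorb the union bound over at most $N$ terms. Nothing is missing; your observation about the role of the $1/N$ factor is precisely the point of the threshold in \eq{good-def2}.
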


\subsection{Extending the construction to $l$-multi-channels}

In this section, we generalize the compound structure proposed in the foregoing subsection to the case of $l$-multi-channels. Suppose that a multi-channel consisting of set of $l$ B-DMCs $W_1,W_2,\dots,W_l$ is given. We fix an $l \times l$ invertible matrix $G_0$ as the initial matrix. Then the building block corresponding to $G_0$ is shown in Figure\,\ref{general_building}. In this figure, $x^l_1 = u^l_1.G_0$ and then $x_1,x_2,\dots,x_l$ are transmitted through $W_1,W_2,\dots,W_l$, respectively. 

\begin{figure}[h]
\begin{center}
\includegraphics[width=2in]{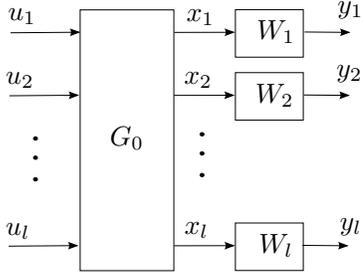}
\caption{The building block for the general case of $l$-multi-channels}
\label{general_building}
\end{center}
\end{figure}

In \cite{KSU}, a general transformation $\Gn$ is considered, where $G$ is an arbitrary $l \times l$ matrix with $l \geq 3$. A necessary and sufficient condition on $G$ is provided which guarantees polarization for any BSM channel. It is proved in \cite{KSU} that if $G$ is an invertible matrix, then polarization happens if and only if $G$ is not upper triangular. As a result, we can pick the matrix $G_0$ to satisfy this condition. Then the results of the forgoing subsection can be easily generalized to compound schemes of length $l^n$ transmitted over an $l$-multi-channel with transform matrix $G_0^{\otimes n}$ . However, the problem is that the successive cancellation decoder for this scheme is not easy to implement and decoding complexity grows by a factor of $2^l$. 

In the case that $l=2^m$ is a power of two, we pick the initial matrix for the building block to be $G_0 = G^{\otimes m}$, where G is the base $2 \times 2$ polarization matrix. Then the polarization matrix $G^{\otimes n}$ is applied to this building block resulting in a compound polar code of length $N = 2^{n+m}$. The advantage of picking this particular $G_0$ is the low complexity decoding algorithm. In fact, the successive cancellation decoder with complexity $O(N \log N)$ that is used for the Arikan's polar code of length $N$ can be applied to this compound code as well.

\subsection{$l$-compound polar codes with low complexity decoder}

In this section, for an arbitrary number of constituent channels $l$, we propose a scheme which enjoys the low complex $O(N \log N)$ decoder. 

For $n \geq 0$, we construct the general scheme with length $N=l.2^n$ as follows. We apply the polar transformation $G^{\otimes n}$ to the proposed building block in Figure\,\ref{general_building}. The block diagram of the proposed transformation is shown in Figure\,\ref{general_scheme}. In fact, the input sequence $u_1^N$ is multiplied by $G_0 \otimes G^{\otimes n}$. We design the permutation $\pi$ in such a way that the first $N/l$ encoded bits are transmitted through $N/l$ independent copies of $W_1$, the second $N/l$ encoded bits are transmitted through $N/l$ independent copies of $W_2$ etc. This is the general compound polar transformation.

\begin{figure}[h]
\begin{center}
\includegraphics[width=\linewidth]{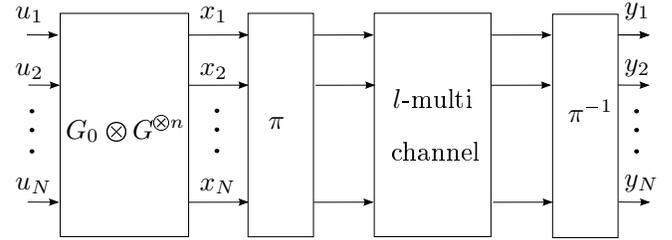}
\caption{The compound polar transformation for the general case of $l$-multi-channels}
\label{general_scheme}
\end{center}
\end{figure}

Suppose that $W_1,W_2,\dots,W_l$ are BSM channels. Then it can be shown that the compound polar transformation in Figure\,\ref{general_scheme} is equivalent to $l$ separated polar transformations of length $2^n$ for $l$ certain bit-channels corresponding to the building block, independently. Then the results of \Tref{thm_error} and \Cref{cor2} can be generalized to this compound scheme accordingly. 

The original successive cancellation (SC) decoder of polar codes invented by Arikan in \cite{Arikan} can be extended to the compound polar codes with some small modifications in a straightforward way. Let $N=l2^n$ and suppose that $u^N_1$ is the vector that is multiplied by $G_0 \otimes \Gn$ and then transmitted over independent copies of $W_1,W_2,\dots,W_l$ as shown in Figure\,\ref{general_scheme}. Let $y^N_1$ denote the received word. For $i=1,2,\dots, N$ if $W^{(i)}_N$ is not a good bit-channel, then the decoder knows that $i$-th bit $u_i$ is set to zero and  therefore, $\hat{u_i} = u_i = 0$. Otherwise, the decoder computes the likelihood $L^{(i)}_N$ of $u_i$, given the channel outputs $y^N_1$ and previously decoded $\hat{u}^{i-1}_1$. Then it makes the hard decision based on $L_N^{(i)}$.

The likelihood functions $L^{(i)}_N$ can be computed recursively similar to what Arikan proposed \cite{Arikan}. The only question is how to initiate the SC decoder for $n = 0$, when $N = l$. A naive way of computing transition probabilities of constituent bit-channels results in the complexity $O(2^l)$. The recursive steps can be done using Arikan's refined SC decoding algorithm with complexity $O(N\log N/l)$. Therefore, the total complexity is $O\bigl(N(\log N-\log l + 2^l)\bigr)$. As $N$ grows large, the dominating term is $N\log N$ and therefore the total complexity of SC decoding algorithm is $O(N\log N)$.

\section{Simulation results}
\label{sec:four}

Transmission over AWGN channel with $16$-QAM BICM is considered. In $16$-QAM BICM, there are actually two constituent binary-input channels. Among the $4$ bits in each symbol, two of them go through one channel denoted by $W_1$ and the other two goes through the other channel denoted by $W_2$. In the constellation that we are using, the mapping is such that bits 1 and 2 in each symbol goes through $W_1$ and bits 3 and 4 goes through $W_2$. The conventional way of modulating a codeword of length $N$ is to split into $N/4$ sub-blocks of 4 consecutive bits each and map them into $N/4$ symbols. 

For the code construction, we take a numerical approach to estimate to probability of error of the individual bit-channels. For simulation, the block length is fixed to $2^{10} = 1024$ and the rate to $1/2$. Transmission over AWGN channel is considered. In the separated scheme, we split the bits into two groups based on the channel that they observe. Then we construct two polar codes for $W_1$ and $W_2$ separately. This should be done in such a way that the total rate is $1/2$. Since the two channels are different, we have to figure out how to assign the rate to be transmitted on each channel. We fix the $E_b/N_0 = 5$dB and then numerically estimate the bit-channel probability of error for each of the channels. We pick the rates such that the total probability of error is minimized. It turns out that the rate $0.62$ on the stronger channel $W_1$ and $0.38$ on the weaker channel $W_2$ minimizes the total probability of error at $5$ dB. We use the same scheme for all SNRs. 

For the compound scheme, we use an interleaver to guarantee the right ordering of the transmitted bits as depicted in Figure\,\ref{scheme}. In fact, the interleaver switches the second and third bit in each symbol with each other. There is a deinterleaver at the decoder which does the same to the channel outputs. The comparison between the two methods is shown in Figure\,\ref{plot3}. As we can observe, our compound scheme is about $1.5$dB better than the separated scheme at moderate SNR's. As SNR increases, the gaps become larger and the curves start diverging. 

\begin{figure}[h]
\centering
%

\begin{tikzpicture}

\begin{semilogyaxis}[
scale only axis,
width = 2.5in,
height = 1.875in,
xmin=2, xmax=8,
ymin=1e-05, ymax=1,
xlabel={$E_b/N_0$ [dB]},
ylabel={Block error rate},
title = {$N = 1024$, rate $= 1/2$, $16$-QAM},
xmajorgrids,
ymajorgrids,
yminorgrids,
legend entries={{\small separated polar code}, {\small compound polar code} },
legend style={nodes=right}]

\addplot [
color=red,
solid,
mark=x,
mark options={solid}
]
coordinates{
 (3,0.9174)
 (3.5,0.7519)
 (4,0.4348)
 (4.5,0.188)
 (5,0.0611)
 (5.5,0.0184)
 (6,0.0048)
 (6.5,0.0015)
 (7,0.00043946)
 (7.5,9.7561e-05)

};

\addplot [
color=blue,
solid,
mark=x,
mark options={solid}
]
coordinates{
 (2,0.99)
 (2.5,0.97)
 (3,0.641)
 (3.5,0.3509)
 (4,0.1107)
 (4.5,0.0247)
 (5,0.0032)
 (5.5,0.00036449)
 (6,4.2146e-05)

};

\end{semilogyaxis}

\end{tikzpicture}
\caption{Performance of the proposed scheme over 16-QAM}
\label{plot3}
\end{figure}
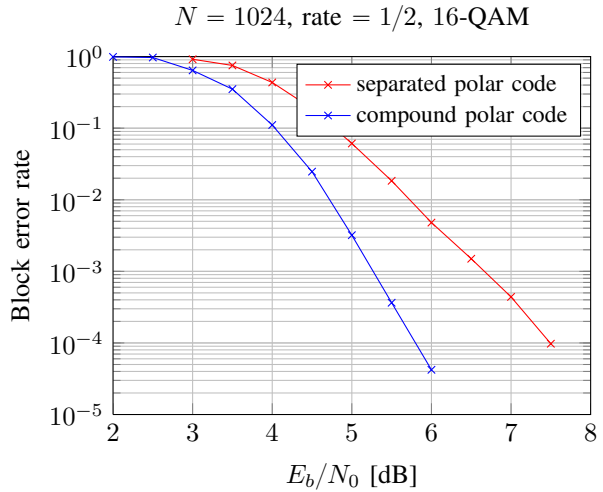

\section{Conclusions and Future Work}
\label{sec:five}

In this work, we proposed a compound polar-based scheme to be used over multi-channels. We extended the channel polarization theorem to this case and proposed the compound polar code construction. We also provided simulation results at finite block lengths for BICM channels. There are a couple of open problems regarding the proposed construction. One is regarding the code construction over BICM channels. We took a numerical simulation-based approach to estimate the bit-channel probability of errors. An alternative way is to extend the Tal-Vardy method \cite{TV} for efficiently constructing polar codes to the case of BICM channels. This is left as a future work. Another question is what is the best ordering of the channels in the general building block in Figure\,\ref{general_building}. It turns out that for the case of $2$-multi-channels the ordering does not matter. However, for general $l$ different orderings may result in different polarization rates. The question is how to characterize the polarization rate in terms of the building block and how to pick the best ordering. The second open problem is about choosing the initial $l \times l$ matrix $G_0$. Which $G_0$ results in the best performance at finite block lengths? One strategy is to pick $G_0$ that maximizes the \emph{polarization rate} among all $l \times l$ channels as characterized in \cite{KSU}. However, since $G_0$ is only used in one level of polarization in the compound polar code, this is not necessarily the best choice. The answer to these questions will help to design more efficient schemes at finite block lengths.

\bibliographystyle{IEEEtran}
\vspace*{1.75ex}

\end{document}